\newtheorem{lemma}{Lemma}
\newtheorem{theorem}{Theorem}
\begin{document}

\date{October 24, 2001}

\title{\textbf{Brooks' Vertex-Colouring\\ Theorem in Linear Time} \thanks{Technical Report CS-AAG-2001-05, Basser Department of Computer Science, The University of Sydney, 2001. Completed as an undergraduate research project at The University of Sydney (Sydney, Australia)  and McGill University (Montr\'eal, Canada).}}

\author{Bradley Baetz \qquad David R. Wood\footnotemark[2]}

\footnotetext[2]{School of Mathematical Sciences, Monash University, Melbourne Australia (\texttt{david.wood@monash.edu}). Research supported by the Australian Research Council.}

\maketitle

\begin{abstract} Brooks' Theorem  [R.\ L.\ Brooks, On Colouring the Nodes of a
Network, \emph{Proc.\ Cambridge Philos.\ Soc.} \textbf{37}:194-197, 1941]
states that every graph $G$ with maximum degree $\Delta$, has a
vertex-colouring with $\Delta$ colours, unless $G$ is a complete graph or an
odd cycle, in which case $\Delta+1$ colours are required. Lov{\'a}sz [L.\
Lov{\'a}sz, Three short proofs in graph theory, \emph{J.\ Combin.\ Theory Ser.}
\textbf{19}:269-271, 1975] gives an algorithmic proof of Brooks' Theorem. 
Unfortunately this proof is missing important details and it is thus unclear
whether it leads to a linear time algorithm. In this paper we give a complete
description of the proof of Lov{\'a}sz, and we derive a linear time algorithm
for determining the vertex-colouring guaranteed by Brooks' Theorem. 
\end{abstract}


\section{Introduction}

Let $G=(V,E)$ be a simple graph with maximum degree $\Delta$. Undefined
graph-theoretic terms can be found in~\cite{CL96}. A \emph{vertex-colouring} of
$G$ is a mapping from the vertex set of $G$ to some set of colours such that
adjacent vertices receive different colours. For convenience we take the set of
colours to be the positive integers $\{1,2,\dots\}$. A graph is said to be
$k$-\emph{colourable} if it has a vertex-colouring with at most $k$ colours.
Minimising the number of colours in a vertex-colouring of a given graph is a
fundamental problem in algorithmic graph theory with applications in register
allocation for example. Unfortunately, determining if a given graph is
$k$-colourable is NP-complete~\cite{Karp72}. The sequential greedy algorithm,
which chooses for each vertex $v$ in turn, the minimum colour not used by a
neighbour of $v$, will use at most $\Delta+1$ colours, since at each vertex $v$
there is at most $\Delta$ different colours assigned to the neighbours of $v$.
Brooks~\cite{Brooks41} proved the following improvement to this result.

\begin{theorem}[Brooks' Theorem~\cite{Brooks41}]
Every graph $G$ with maximum degree $\Delta$, has a vertex-colouring with $\Delta$
colours, unless $G$ is a complete graph or an odd cycle, in which case $\Delta+1$ colours
are required.
\end{theorem}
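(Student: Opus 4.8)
The plan is to prove the two directions separately. The lower bounds are immediate: the complete graph $K_{\Delta+1}$ has $\Delta+1$ pairwise adjacent vertices and hence needs $\Delta+1$ colours, while an odd cycle (for which $\Delta=2$) is not bipartite and hence needs $3=\Delta+1$ colours. For the upper bound it suffices to show that every \emph{connected} graph $G$ with maximum degree $\Delta$ that is neither a complete graph nor an odd cycle is $\Delta$-colourable; the general statement follows by colouring each component separately, using that a component equal to $K_{\Delta+1}$ or to an odd cycle of the relevant length would force $G$ itself to be of that form. The cases $\Delta\le 2$ are handled by inspection, since then $G$ is a path or a cycle and, odd cycles aside, such graphs are $2$-colourable. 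So from now on $G$ is connected with $\Delta\ge 3$.

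I would first dispose of the case that $G$ is not regular, so some vertex $r$ has $\deg(r)<\Delta$. Take a spanning tree rooted at $r$ and list the vertices $v_1=r,v_2,\dots,v_n$ so that each $v_i$ with $i\ge 2$ has its parent earlier in the list. Colour greedily in the \emph{reverse} order $v_n,v_{n-1},\dots,v_1$: when $v_i$ ($i\ge 2$) is coloured its parent is still uncoloured, so at most $\Delta-1$ colours are forbidden, and $r$ is safe because $\deg(r)<\Delta$. This argument is in fact a reusable subroutine: it $\Delta$-colours \emph{any} connected graph possessing a vertex of degree less than $\Delta$, and by permuting colours afterwards one may prescribe the colour of that vertex.

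Now assume $G$ is regular of degree $\Delta\ge 3$. If $G$ has a cut vertex $v$, write $G=G_1\cup G_2$ with $G_1\cap G_2=\{v\}$; then $v$ has degree $<\Delta$ in each $G_i$, so each $G_i$ is $\Delta$-colourable with $v$ receiving colour $1$ (apply the subroutine when $G_i$ is neither complete nor an odd cycle, and colour the small exceptional graphs directly), and the two colourings agree on $v$ and thus combine. The remaining --- and principal --- case is $G$ $2$-connected, regular of degree $\Delta\ge 3$, and not complete. Here everything rests on the following combinatorial fact: there are vertices $x,y,z$ with $x\not\sim y$, $x\sim z$, $y\sim z$, and $G-\{x,y\}$ connected. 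Granting it, colour $x$ and $y$ both with $1$, then order $V\setminus\{x,y\}$ with $z$ last so that every earlier vertex has a later neighbour (possible since $G-\{x,y\}$ is connected, via a reversed breadth-first search from $z$), and colour greedily in that order: every vertex other than $z$ has an uncoloured neighbour when it is coloured, while $z$ sees colour $1$ at both $x$ and $y$, so at most $\Delta-1$ colours occur on its neighbourhood and a colour remains for $z$.

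The heart of the argument --- and the step I expect to be the main obstacle --- is proving that such $x,y,z$ exist. If $G$ is $3$-connected, pick a vertex $z$ with two non-adjacent neighbours $x,y$ (one exists because $G$ is connected and not complete) and observe that $G-\{x,y\}$ is connected by $3$-connectivity. If $G$ is $2$-connected but not $3$-connected, it has a $2$-cut, so we may pick a vertex $a$ for which $G-a$ is connected but has a cut vertex; let $B_1,B_2$ be two distinct end blocks of $G-a$, with cut vertices $c_1,c_2$. Since $G$ is $2$-connected, $a$ must have a neighbour $x\in V(B_1)\setminus\{c_1\}$ and a neighbour $y\in V(B_2)\setminus\{c_2\}$, for otherwise some $c_i$ would be a cut vertex of $G$; as $x$ and $y$ lie in different blocks of $G-a$ they are non-adjacent, and $z:=a$ is adjacent to both. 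Finally $G-\{x,y\}$ is connected: deleting the non-cut vertex $x$ and then the still-non-cut vertex $y$ (it lies in a different block) keeps $G-a$ connected, and $a$ still has a neighbour there since $\deg_G(a)=\Delta\ge 3$. The delicate points to get right are the degenerate possibilities --- notably that an end block cannot be a single edge, which is exactly where regularity of degree $\ge 3$ is used --- and the verification that the two successive vertex deletions really leave a connected graph; organising the cut-vertex case so the pieces agree on $v$ is a secondary nuisance.
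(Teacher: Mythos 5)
Your proposal is correct, and its engine is exactly the one the paper uses (Lov\'asz's argument): find two non-adjacent vertices with a common neighbour whose deletion leaves the graph connected, give them the same colour, and colour greedily in an order in which every remaining vertex except that common neighbour still has an uncoloured neighbour when its turn comes --- these are precisely the paper's Lemmas~\ref{lem:SequentialColouring} and~\ref{lem:FindAB}. Where you differ is the outer decomposition. The paper never invokes regularity: it reduces to biconnected components by a block-cut-forest colour-swapping argument (Lemma~\ref{lem:NonBiconnected}), and inside a biconnected graph that is neither complete nor a cycle it first disposes of the case where every degree is $2$ or $n-1$ (forcing $G=K_{1,1,n-2}$), then takes a vertex $x$ with $3\leq\deg(x)\leq n-1$ and splits on whether $G\setminus x$ is biconnected; your 3-connected versus 2-cut dichotomy, with end blocks of $G-a$ in the second case, is the closely parallel step. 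You instead follow the classical organisation: non-regular connected graphs via the greedy from a low-degree root, regular graphs with a cut vertex by gluing two pieces with a prescribed colour at the cut vertex, and only then the 2-connected regular case. Your regularity reduction is what lets you skip the $K_{1,1,n-2}$ analysis; the paper's block-based reduction is what makes everything run in linear time (the point of the paper), which your colour-permuting gluing would need extra care to achieve, though for the bare existence statement that is irrelevant. Two small remarks: your compressed justification that $(G-a)\setminus\{x,y\}$ stays connected (deleting a non-cut vertex from each of two distinct end blocks) is no terser than the paper's own wording of the same step and is fine once expanded via the block-cut tree; and your worry that an end block might be a single edge is unnecessary --- the argument goes through even then, so regularity is really only needed to guarantee $\deg(a)\geq 3$ when reattaching $a$.
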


We say a vertex-colouring of graph $G$ with maximum degree $\Delta$ is a
\emph{Brooks-colouring} if the number of colours is at most $\Delta$, or $\Delta+1$ if
$G$ is a complete graph or an odd cycle. 

The original proof by Brooks leads to a quadratic time algorithm for
calculating a Brooks-colouring. Since then  Ponstein~\cite{Ponstein69} and
Lov{\'a}sz~\cite{Lovasz75} (also see Bryant~\cite{Bryant96}) describe
algorithmic proofs of Brooks' Theorem. However, the running time of the
resulting algorithms are not analysed, and in the proof by
Lov{\'a}sz~\cite{Lovasz75}, many important details are omitted.
In this paper, we give a complete description of the proof of Brooks' Theorem
due to Lov{\'a}sz~\cite{Lovasz75}, and derive a linear time algorithm for
computing a Brooks-colouring.

\section{The Details}

We start with the following well-known result,
which can be proved by performing a pre-order traversal of the block-cut-forest
of the given graph, and possibly swapping two colours in each biconnected
component.

\begin{lemma}
\label{lem:NonBiconnected}
Given $k$-colourings of the biconnected components of a graph $G=(V,E)$, a $k$-colouring
of $G$ can be determined in $O(|V|+|E|)$ time.\qed
\end{lemma}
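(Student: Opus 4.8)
\medskip
\noindent\emph{Proof plan.}
The plan is to turn the parenthetical sketch into an explicit algorithm and verify it. First I would compute the block-cut-forest $T$ of $G$ in $O(|V|+|E|)$ time via the standard depth-first search (see~\cite{CL96}): $T$ has one node for every block (biconnected component) of $G$ and one node for every cut vertex, with an edge joining each cut vertex $c$ to every block containing $c$, and $T$ is a forest with one tree per connected component of $G$. I would root each tree of $T$ at a block-node (treating a component that is a single isolated vertex separately, colouring it $1$), so that along every root-to-node path the node types alternate block, cut vertex, block, and so on.

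Next, process the block-nodes in pre-order while maintaining a partial proper $k$-colouring $\phi$ of the vertices seen so far. When a root block $B_0$ is reached, extend $\phi$ by adopting the given colouring of $B_0$ verbatim. When a non-root block $B$ is reached, let $c$ be its parent cut vertex in $T$; the key structural fact is that $c$ is then the \emph{unique} vertex of $B$ already coloured, since distinct blocks meet in at most one cut vertex and in $T$ that vertex lies on the unique tree-path between them, which for $B$ and any earlier-processed block passes through $c$. Let $\alpha=\phi(c)$ and let $\beta$ be the colour of $c$ in the given colouring $\phi_B$ of $B$. Because transposing two colour classes of a proper colouring again gives a proper colouring, replace $\phi_B$ by the colouring obtained from it by swapping colours $\alpha$ and $\beta$; now $c$ has colour $\alpha$ under the modified $\phi_B$, so $\phi$ can be extended by giving every vertex of $B$ other than $c$ its colour under the modified $\phi_B$. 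This keeps $\phi$ proper (every newly created edge lies inside $B$, which is properly coloured) and keeps all colours in $\{1,\dots,k\}$. After all block-nodes are processed, $\phi$ colours all of $G$.

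For correctness one observes that $\phi$ never recolours a vertex: a non-cut vertex lies in a single block, and whenever a block is processed after the one that first coloured a shared cut vertex $c$, the construction leaves $\phi(c)$ untouched. Hence every edge of $G$, lying in exactly one block $B$, joins two vertices whose final colours coincide with their colours in the (possibly transposed) colouring of $B$, and these are distinct; so $\phi$ is a proper $k$-colouring. For the running time, building and rooting $T$ is linear, and the work charged to a block $B$ with $v_B$ vertices and $e_B$ edges is $O(v_B)$: look up $\beta$, then make one pass over $V(B)$ that performs the transposition and the extension simultaneously, so the swap costs $O(v_B)$ rather than $O(k)$. Since the blocks partition $E(G)$ and each block is connected, $\sum_B v_B \le \sum_B (e_B+1) \le |E| + (\textrm{number of blocks}) = O(|V|+|E|)$, and isolated vertices cost $O(1)$ each, so the total is $O(|V|+|E|)$.

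The step I expect to require the most care is the structural claim that on reaching a non-root block exactly one of its vertices is already coloured — this is precisely where rootedness of the block-cut-forest and the pre-order discipline are used, and it is what makes a single colour-transposition per block suffice. A secondary implementation point is to realise that transposition in time proportional to $|B|$ rather than to the number of colours, which the combined single pass over $V(B)$ achieves.
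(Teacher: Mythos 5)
Your proof is correct and follows exactly the route the paper sketches (the paper gives no detailed proof, only the remark about a pre-order traversal of the block-cut-forest with a two-colour swap in each biconnected component): you root the forest at block nodes, argue that the parent cut vertex is the unique already-coloured vertex of each block, transpose two colours there, and charge the swap to $|V(B)|$ to get linear time. Nothing is missing; your elaboration of the uniqueness claim and of the per-block cost is precisely the detail the paper leaves implicit.
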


Lemma~\ref{lem:NonBiconnected} implies that we need only describe a linear time
algorithm  for determining a Brooks-colouring in the case of a biconnected
graph. 

\begin{lemma}
\label{lem:SequentialColouring}
If a graph $G$ with maximum degree $\Delta$ contains two vertices $a$ and $b$
at distance 2 such that $G\setminus\{a,b\}$ is connected, then a
$\Delta$-colouring of $G$ can be determined in $O(|V|+|E|)$ time.
\end{lemma}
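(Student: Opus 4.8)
The idea behind this lemma is the classic Brooks-style argument: if we can find two vertices $a,b$ at distance $2$ with a common neighbor $v$, such that removing $a$ and $b$ leaves the rest of the graph connected, then we can order the remaining vertices so that greedy colouring succeeds with only $\Delta$ colours. The plan is to first colour $a$ and $b$ with colour $1$ (legitimate since they are non-adjacent, being at distance $2$), then process $V\setminus\{a,b\}$ in a carefully chosen order, ending at $v$.

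**The ordering and why it works.** Since $G\setminus\{a,b\}$ is connected, I would build a spanning tree of $G\setminus\{a,b\}$ rooted at $v$, and order the vertices $v_1,v_2,\dots,v_{n-2}$ so that each $v_i$ (for $i<n-2$) has a neighbor later in the order — concretely, list the vertices in decreasing order of distance from $v$ in the spanning tree (or equivalently reverse-BFS/DFS order), so that $v_{n-2}=v$ and every other $v_i$ has its tree-parent appearing later. Now run greedy on $v_1,\dots,v_{n-2}$ with colours $\{1,\dots,\Delta\}$. For each $v_i$ with $i<n-2$: at least one neighbor (its parent in the tree) is still uncoloured when we reach $v_i$, so at most $\Delta-1$ of its neighbors are coloured, leaving a free colour. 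For the last vertex $v_{n-2}=v$: it has at most $\Delta$ neighbors, but two of them, $a$ and $b$, already share colour $1$, so the neighbors of $v$ use at most $\Delta-1$ distinct colours, and again a colour is free. This yields a proper $\Delta$-colouring of all of $G$.

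**Running time.** Computing the common neighbor $v$ and verifying the distance-$2$ and connectivity hypotheses is assumed (they are part of the hypothesis, so I only need to \emph{use} $a,b$, not find them), but in any case a single BFS from $a$ in $G\setminus\{b\}$ identifies a common neighbor $v$. Building the rooted spanning tree of $G\setminus\{a,b\}$ and the reverse-search order takes $O(|V|+|E|)$ via BFS or DFS. The greedy step, implemented so that for each $v_i$ we scan its adjacency list and mark the colours used by already-coloured neighbors (using an auxiliary array of size $\Delta+1$ reset via the neighbor list, not via reinitialization), costs $O(\deg(v_i))$ per vertex, hence $O(|V|+|E|)$ overall. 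So the total is linear.

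**Main obstacle.** The only real subtlety — and the step I would be most careful about — is guaranteeing that the greedy order genuinely gives every non-final vertex an uncoloured neighbor \emph{and} that $v$ ends up last; this is exactly what the "reverse order of a search tree rooted at $v$" buys us, but one must check that $v$ is the unique sink of this order and that connectivity of $G\setminus\{a,b\}$ is what makes such a spanning tree exist. A secondary point is the bookkeeping to keep the per-vertex colour search at $O(\deg)$ rather than $O(\Delta)$, so that graphs with $|E|=o(|V|\Delta)$ are still handled in linear time; this is a standard trick but worth stating explicitly.
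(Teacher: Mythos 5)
Your proof is correct and follows essentially the same approach as the paper: colour $a$ and $b$ alike, order $G\setminus\{a,b\}$ via a search tree so that every vertex except the common neighbour of $a$ and $b$ has an uncoloured neighbour when greedily coloured, and finish at that common neighbour, whose two like-coloured neighbours $a,b$ free up a colour. Your extra remarks on the bookkeeping needed to keep the greedy step at $O(\deg(v_i))$ per vertex only make explicit what the paper leaves to ``standard depth-first search algorithms.''
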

\begin{proof}
Let $(v_1,v_2,\dots,v_n)$ be an ordering of the vertices of
$G\setminus\{a,b\}$ such that $v_1$ is a vertex adjacent to both $a$ and $b$,
and for every $i\geq2$, the vertex $v_i$ has at least one neighbour $v_j$ with
$j<i$. Since $G\setminus\{a,b\}$ is connected, such an ordering can be
determined by a depth-first search of $G\setminus\{a,b\}$. Let the colour of
$a$ and $b$ be $1$. This is valid since $a$ and $b$ are not adjacent. Now, for
each $i$, $i=n,n-1,\dots,1$, colour the vertex $v_i$ with the minimum positive
integer  which is different from the colours assigned to the neighbours of
$v_i$ which are already coloured. For each $i$, $1\leq i\leq n-1$, the colour
assigned to $v_i$ is at most $\Delta$ since there are at most $\Delta-1$
neighbours of $v_i$ already coloured. The  colour assigned to $v_a$ is at most
$\Delta$ since $v_1$ has two neighbours (namely, $a$ and $b$) receiving the
same colour. Thus, we have a vertex-colouring of $G$ with at most $\Delta$
colours. This procedure can be implemented in $O(|V|+|E|)$ time using standard
depth-first search algorithms. 
\end{proof}

\begin{lemma}
\label{lem:FindAB}
Let $G=(V,E)$ be a biconnected graph which is not a complete graph or a cycle.
Then vertices $a$ and $b$ at distance 2 in $G$ can be found in $O(|V|+|E|)$
time such that $G\setminus\{a,b\}$ is connected.
\end{lemma}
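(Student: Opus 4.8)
The plan is to split according to whether $G$ is $3$-connected, and in each case to exhibit a pair $\{a,b\}$ together with a common neighbour $v$ of $a$ and $b$ — so that $a,b$ are non-adjacent and at distance $2$ — for which $G\setminus\{a,b\}$ is connected, i.e.\ for which $\{a,b\}$ is not a separating pair. First a remark used in both cases: since $G$ is not complete but is connected, some vertex has two non-adjacent neighbours (otherwise every neighbourhood induces a clique, forcing the connected graph $G$ to be complete), and such a triple $a,b,v$ can be produced in $O(|V|+|E|)$ time. Compute all degrees; take a vertex $u$ of minimum degree (if $\deg(u)=|V|-1$ then every vertex has degree $|V|-1$ and $G=K_{|V|}$, a contradiction), scan the vertices to find a non-neighbour of $u$, and run a breadth-first search from $u$; then the second layer is non-empty, so it supplies $a:=u$, a vertex $b$ in layer $2$, and a vertex $v$ in layer $1$ adjacent to both.

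\textbf{Case 1: $G$ is $3$-connected.} Take $a,b,v$ as above. Deleting the two vertices $a,b$ leaves $G$ connected by $3$-connectivity, so $\{a,b\}$ is as required. Whether $G$ is $3$-connected can be decided in $O(|V|+|E|)$ time — and, if it is not, a separating pair produced — using the linear-time triconnected-components algorithm of Hopcroft and Tarjan.

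\textbf{Case 2: $G$ is biconnected but not $3$-connected.} Here I would work with a separating pair $\{x,y\}$ chosen so that $\deg_G(x)\ge 3$ (see below for why such a pair exists). The graph $G\setminus x$ is connected and has $y$ as a cut vertex, so its block--cut tree has at least two end-blocks; pick two distinct end-blocks $B_1,B_2$, with associated cut vertices $c_1,c_2$. Biconnectivity of $G$ forces $x$ to have a neighbour $a\in B_1\setminus\{c_1\}$ and a neighbour $b\in B_2\setminus\{c_2\}$: if $x$ had no neighbour in $B_i\setminus\{c_i\}$, then every vertex of $B_i\setminus\{c_i\}$ would have all its neighbours in $B_i$, so $c_i$ would separate $B_i\setminus\{c_i\}$ from the rest of $G$. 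Now $a$ and $b$ lie in distinct blocks of $G\setminus x$, hence are non-adjacent, with $v:=x$ a common neighbour; and since $a,b$ are non-cut vertices of $G\setminus x$ in distinct end-blocks, $(G\setminus x)\setminus\{a,b\}$ is connected. As $\deg_G(x)\ge 3$, $x$ still has a neighbour in $(G\setminus x)\setminus\{a,b\}$, so $G\setminus\{a,b\}$ is connected. All the required block--cut data and adjacencies come from a single depth-first search, giving $O(|V|+|E|)$ time.

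The step I expect to be the main obstacle is exactly the one flagged: guaranteeing a separating pair one of whose vertices has degree at least $3$ (this rules out the only failure mode above, namely $x$ becoming isolated in $G\setminus\{a,b\}$ when $\deg_G(x)=2$). I would prove existence by contradiction: if every separating pair of $G$ consisted of two vertices of degree $2$, then deleting any vertex of degree $\ge 3$ would leave a biconnected graph, so suppressing all degree-$2$ vertices would yield a loopless, parallel-edge-free, $3$-connected graph $G'$ of which $G$ is a subdivision; but then either $G=G'$ is itself $3$-connected (excluded in this case), or $G$ has a suppressed vertex, and the two branch vertices at the ends of its subdivided edge form a separating pair of $G$ in which at least one vertex has degree $\ge 3$ — a contradiction either way. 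Since this argument also identifies \emph{where} a good pair lives (a $2$-cut, a pair of parallel-edge endpoints, or the ends of a subdivided edge in $G'$), it can be read off in linear time from the triconnected-components decomposition, so the whole construction stays within the $O(|V|+|E|)$ budget.
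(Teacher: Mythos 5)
Your proof is correct, but it is organised around a different dichotomy than the paper's. You branch on whether $G$ is $3$-connected, which forces you to invoke the Hopcroft--Tarjan triconnected-components machinery twice: once to decide $3$-connectivity, and once (together with your auxiliary lemma that some separating pair contains a vertex of degree at least $3$, proved via the subdivision structure of $G$) to locate the pair $\{x,y\}$ that seeds your end-block argument. The paper instead branches on degrees: if every vertex has degree $2$ or $n-1$ it identifies $G$ as $K_{1,1,n-2}$ and takes two degree-$2$ vertices; otherwise it picks a single vertex $x$ with $3\le\deg(x)\le n-1$ and runs one biconnectivity test on $G\setminus x$ --- if $G\setminus x$ is biconnected it simply sets $a=x$ and lets $b$ be any vertex at distance $2$, and if not it applies essentially your end-block argument to $G\setminus x$. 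The key saving is that the paper never needs to certify $3$-connectivity or to prove your auxiliary lemma: by allowing $a$ to be the high-degree vertex $x$ itself in the ``$G\setminus x$ biconnected'' branch, it covers the $3$-connected case and more with nothing beyond ordinary depth-first-search biconnectivity (Tarjan~\cite{Tarjan72}), which is far lighter to implement than SPQR-style decompositions. What your route buys in exchange is a cleaner conceptual statement (any distance-$2$ pair works in a $3$-connected graph) and a reusable structural fact about separating pairs; the one place you should tighten it is the final algorithmic claim that a separating pair with a degree-$\ge 3$ vertex ``can be read off'' from the decomposition --- your own existence proof already gives the constructive recipe (suppress degree-$2$ vertices; if the resulting multigraph has a parallel or subdivided edge use its branch-vertex endpoints, otherwise $G$ has minimum degree $3$ and any separating pair qualifies), and spelling that out would close the loop.
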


\begin{proof}
Let $n=|V|$. Since $G$ is biconnected and not a 3-cycle, $n\geq4$.

Suppose every vertex $v$ has degree $2$ or degree $n-1$. Since $G$ is not a
cycle, at least one vertex has degree $n-1$. Thus, and since $G$ is
biconnected, at least two vertices have degree $n-1$, as otherwise $G$ would
be  the 1-connected graph shown in Figure~(a). Since $G$ is
not a complete graph there is at least one vertex of degree $2$, which implies
there is exactly two vertices of degree $n-1$; that is, $G$ is $K_{1,1,n-2}$,
as shown in Figure~(b). Since $G$ is not a 3-cycle there are
at least two vertices of degree $2$. Let $a$ and $b$ be any two degree $2$
vertices. Then $G\setminus\{a,b\}$ is connected, and we are done. Clearly this
case can be recognised and the vertices $a$ and $b$ determined in $O(|V|+|E|)$
time. 

\begin{figure}[htb]
\begin{center}
\includegraphics{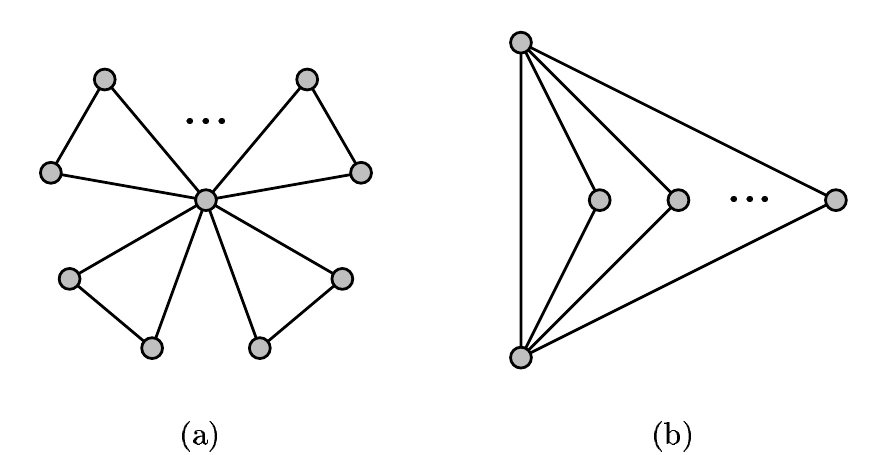}
\end{center}
\end{figure}

Otherwise $G$ has a vertex $x$ with $3\leq\deg(x)\leq n-1$. We now consider two
cases depending on the biconnectivity of $G\setminus x$. First, suppose
$G\setminus x$ is biconnected. Let $a=x$ and $b$ be any vertex at distance 2
from $x$. Then $G\setminus\{a,b\}$ is connected, and we are done. Second,
suppose $G\setminus x$ is not biconnected. Since $G$ is biconnected,
$G\setminus x$ is connected. Let $B_1$ and $B_2$ be end-blocks of $G\setminus
x$ with respective cut-points $z_1$ and $z_2$. (An end-block corresponds to a
leaf of the block-cut-tree, and since a tree has at least two leaves, $B_1$ and
$B_2$ exist.)\ Since $G$ is biconnected but $G\setminus x$ is not biconnected,
$x$ must be adjacent to vertices in $B_1$ and $B_2$ which are not $z_1$ and
$z_2$. Let $a$ and $b$ be these vertices. The only vertex adjacent to both $a$
and $b$ is $x$, and since $\deg(x)\geq3$, $G\setminus\{a,b\}$ is connected.
Again, this algorithm can be implemented in $O(|V|+|E|)$ time using depth-first
search algorithms for determining biconnectivity and the biconnected components
of a graph~\cite{Tarjan72}. \end{proof}

Of course cycles (both odd and even) and complete graphs can be recognised and
Brooks-colourings for these graphs determined in linear time. Combining this
observation with Lemmata~\ref{lem:NonBiconnected},
\ref{lem:SequentialColouring} and \ref{lem:FindAB}, we obtain the following
result.

\begin{theorem}
There is an algorithm to determine a Brooks-colouring of a given graph $G=(V,E)$ in
$O(|V|+|E|)$ time.\qed
\end{theorem}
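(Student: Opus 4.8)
The plan is to assemble the final algorithm from the three lemmata, processing $G$ one block at a time. First I would compute the block-cut-forest of $G$ using the depth-first search algorithms of~\cite{Tarjan72}, obtaining in $O(|V|+|E|)$ time the list of biconnected components (blocks) $H_1,\dots,H_t$. Since each edge of $G$ lies in exactly one block, $\sum_i(|V(H_i)|+|E(H_i)|)=O(|V|+|E|)$, so it suffices to colour each block in time linear in its own size and then invoke Lemma~\ref{lem:NonBiconnected} to stitch the block-colourings into a colouring of $G$ in $O(|V|+|E|)$ time.

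To colour a single block $H$ with maximum degree $\Delta_H$: in linear time test whether $H$ is a complete graph or a cycle. If $H=K_m$, colour it greedily with $m$ colours; if $H$ is an even cycle, colour it alternately with $2$ colours; if $H$ is an odd cycle, colour it with $3$ colours. Otherwise $H$ is biconnected and is neither a complete graph nor a cycle, so Lemma~\ref{lem:FindAB} produces, in linear time, vertices $a,b$ at distance $2$ with $H\setminus\{a,b\}$ connected, and then Lemma~\ref{lem:SequentialColouring} produces a $\Delta_H$-colouring of $H$ in linear time. Each of these steps runs in $O(|V(H)|+|E(H)|)$ time, so the whole per-block procedure does too, and hence so does the entire algorithm.

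The point needing care — the real content beyond ``apply the lemmata'' — is verifying that the resulting colouring of $G$ is genuinely a Brooks-colouring, i.e.\ that it uses at most $\Delta$ colours unless $G$ is itself a complete graph or an odd cycle. The ``otherwise'' blocks and the even-cycle blocks contribute at most $\Delta_H\le\Delta$ colours, so the only danger is a block $H$ equal to $K_m$ with $m=\Delta+1$, or an odd cycle with $\Delta=2$. I would rule these out, unless $G=H$, by the standard cut-vertex observation: if $H$ is a proper block of a connected graph $G$, then $H$ contains a cut-vertex $v$ of $G$, and $v$ has a neighbour outside $H$, so $\deg_G(v)\ge\deg_H(v)+1$; for $H=K_m$ this forces $\Delta\ge m$, and for $H$ a cycle this forces $\Delta\ge3$. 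Hence a block is coloured with more than $\Delta$ colours only when $G$ equals that block and $G$ is a complete graph or odd cycle — precisely the permitted exception. For disconnected $G$ one argues componentwise (the block-cut-forest already separates the components), noting that colours may be reused freely across components.

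Thus the main obstacle is not algorithmic but a matter of bookkeeping: tracking how many colours each block contributes and confirming the total matches the Brooks bound via the case analysis above. Once that is in place, the $O(|V|+|E|)$ running time is immediate from Lemmata~\ref{lem:NonBiconnected}, \ref{lem:SequentialColouring} and~\ref{lem:FindAB} together with the linear-time recognition of complete graphs and cycles.
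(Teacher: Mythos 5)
Your proposal is correct and follows essentially the same route as the paper, which simply combines Lemmata~\ref{lem:NonBiconnected}, \ref{lem:SequentialColouring} and~\ref{lem:FindAB} with linear-time recognition and colouring of complete graphs and cycles. Your extra cut-vertex argument confirming that a proper block which is complete or an odd cycle never exceeds the $\Delta$ bound is exactly the bookkeeping the paper leaves implicit, and it is carried out correctly.
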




\def\soft#1{\leavevmode\setbox0=\hbox{h}\dimen7=\ht0\advance \dimen7
  by-1ex\relax\if t#1\relax\rlap{\raise.6\dimen7
  \hbox{\kern.3ex\char'47}}#1\relax\else\if T#1\relax
  \rlap{\raise.5\dimen7\hbox{\kern1.3ex\char'47}}#1\relax \else\if
  d#1\relax\rlap{\raise.5\dimen7\hbox{\kern.9ex \char'47}}#1\relax\else\if
  D#1\relax\rlap{\raise.5\dimen7 \hbox{\kern1.4ex\char'47}}#1\relax\else\if
  l#1\relax \rlap{\raise.5\dimen7\hbox{\kern.4ex\char'47}}#1\relax \else\if
  L#1\relax\rlap{\raise.5\dimen7\hbox{\kern.7ex
  \char'47}}#1\relax\else\message{accent \string\soft \space #1 not
  defined!}#1\relax\fi\fi\fi\fi\fi\fi} \def\cprime{$'$}

\end{document}